\journalname{Letters in Mathematical Physics}
\begin{document}

\title{Uncertainty relations with quantum memory for the Wehrl entropy}

\author{Giacomo De Palma}

\institute{Giacomo De Palma \at QMATH, Department of Mathematical Sciences, University of Copenhagen, Universitetsparken 5, 2100 Copenhagen, Denmark\\ Tel.: +45 35 33 68 04\\ \email{giacomo.depalma@math.ku.dk}}

\date{}
\maketitle

\begin{abstract}
We prove two new fundamental uncertainty relations with quantum memory for the Wehrl entropy.
The first relation applies to the bipartite memory scenario.
It determines the minimum conditional Wehrl entropy among all the quantum states with a given conditional von Neumann entropy, and proves that this minimum is asymptotically achieved by a suitable sequence of quantum Gaussian states.
The second relation applies to the tripartite memory scenario.
It determines the minimum of the sum of the Wehrl entropy of a quantum state conditioned on the first memory quantum system with the Wehrl entropy of the same state conditioned on the second memory quantum system, and proves that also this minimum is asymptotically achieved by a suitable sequence of quantum Gaussian states.
The Wehrl entropy of a quantum state is the Shannon differential entropy of the outcome of a heterodyne measurement performed on the state.
The heterodyne measurement is one of the main measurements in quantum optics, and lies at the basis of one of the most promising protocols for quantum key distribution.
These fundamental entropic uncertainty relations will be a valuable tool in quantum information, and will e.g. find application in security proofs of quantum key distribution protocols in the asymptotic regime and in entanglement witnessing in quantum optics.
\keywords{entropic uncertainty relations \and Wehrl entropy \and quantum conditional entropy \and quantum Gaussian states}
\subclass{46B28 \and 46N50 \and 81P45 \and 81V80 \and 94A15}
\end{abstract}

\section{Introduction}
Entropic uncertainty relations provide a lower bound to the sum of the entropies of the outcomes of two incompatible measurements performed on the same quantum state.
Entropic uncertainty relations are a fundamental tool of quantum information theory, since they are the central ingredient in the security analysis of almost all quantum cryptographic protocols and can be used for entanglement witnessing (see \cite{coles2017entropic} for a review).
The quantum information community has recently focused on the scenario with quantum memory \cite{berta2010uncertainty,coles2012uncertainty,furrer2014position,coles2017entropic}, where the entropies are conditioned on the knowledge of external observers holding memory quantum systems correlated with the measured system.

The heterodyne measurement \cite{schleich2015quantum} is one of the main measurements in quantum optics.
It is used for quantum tomography \cite{carmichael2013statistical}, and lies at the basis of one of the most promising quantum key distribution protocols \cite{weedbrook2012gaussian,weedbrook2004quantum}.
The Wehrl entropy \cite{wehrl1979relation,wehrl1978general} of a quantum state is the Shannon differential entropy \cite{cover2006elements} of the outcome of a heterodyne measurement performed on the state.
The elements of the POVM that models the heterodyne measurement are the projectors onto the coherent states \cite{schrodinger1926stetige,bargmann1961hilbert,klauder1960action,glauber1963coherent,klauder2006fundamentals}.
Since the coherent states are not orthogonal, the associated projectors do not commute, and nontrivial entropic uncertainty relations are allowed for the heterodyne measurement alone.

The basic uncertainty relation for the Wehrl entropy states that its minimum is achieved by the vacuum state \cite{lieb1978proof,carlen1991some}. This relation has been recently improved: it has been proven that thermal quantum Gaussian states minimize the Wehrl entropy among all the quantum states with a given von Neumann entropy \cite{de2017wehrl,de2017multimode}.
In this paper, we prove two new fundamental uncertainty relations for the Wehrl entropy in the scenario with quantum memory.
The first relation (\autoref{thm:ineq}) applies to the bipartite scenario with one memory quantum system.
It determines the minimum conditional Wehrl entropy among all the quantum states with a given conditional von Neumann entropy, and proves that this minimum is asymptotically achieved by a suitable sequence of quantum Gaussian states (\autoref{thm:optimbi}).
This sequence is built from a two-mode infinitely squeezed pure state shared between the system to be measured and the memory, applying the quantum-limited amplifier to the system to be measured.
The second relation (\autoref{thm:main}) applies to the tripartite scenario with two memory quantum systems.
It determines the minimum of the sum of the Wehrl entropy of a quantum state conditioned on the first memory quantum system with the Wehrl entropy of the same quantum state conditioned on the second memory quantum system, and proves that also this minimum is asymptotically achieved by a suitable sequence of quantum Gaussian states (\autoref{thm:optimtri}).
This sequence is the purification of the sequence that saturates the bipartite memory uncertainty relation, and the purifying system plays the role of the second memory.

The key ingredient of the proof of \autoref{thm:ineq} is the Entropy Power Inequality with quantum memory \cite{de2018conditional,koenig2015conditional}.
This fundamental inequality determines the minimum conditional von Neumann entropy of the output of the beam-splitter or of the squeezing among all the input states where the two inputs are conditionally independent given the memory and have given conditional von Neumann entropies.
This inequality generalizes the quantum Entropy Power Inequality \cite{de2014generalization,de2015multimode,konig2014entropy,konig2016corrections} to the scenario with quantum memory.
The link with the Wehrl entropy is provided by \autoref{thm:QA}, stating that the heterodyne measurement is asymptotically equivalent to the quantum-limited amplifier in the limit of infinite amplification parameter.
The proof of \autoref{thm:QA} is based on a new Berezin-Lieb inequality (\autoref{thm:LB}) for the scenario with quantum memory.

The fundamental uncertainty relations proven in this paper will be a valuable tool in quantum information and quantum cryptography.
Indeed, one of the most promising protocols for quantum key distribution is based on the exchange of Gaussian coherent states and on the heterodyne measurement \cite{weedbrook2004quantum}.
The security of a variant of this protocol has recently been proven \cite{leverrier2016p,leverrier2017security}.
This variant requires integrating in the protocol a symmetrisation procedure that is difficult to implement.
The security of the original protocol without the symmetrisation has not been proven yet.
With the uncertainty relations proven in this paper, it will be possible to prove the security of the original protocol in the asymptotic regime of a key of infinite length \cite{coles2017entropic}.
The proof might exploit the techniques of \cite{furrer2012continuous}, which proves the security of a quantum key distribution protocol based on the homodyne measurement through an entropic uncertainty relation for the joint measurement of position and momentum.
Among the other possible applications of our results, we mention e.g. entanglement witnessing (see \autoref{cor:ent}).

The paper is structured as follows.
In \autoref{sec:GQS}, we introduce Gaussian quantum systems and the heterodyne measurement.
In \autoref{sec:QA}, we prove the equivalence between heterodyne measurement and quantum-limited amplifier.
In \autoref{sec:bi} and \autoref{sec:tri}, we prove the bipartite and tripartite memory entropic uncertainty relations, respectively, and in \autoref{sec:opt} we prove their optimality.

\section{Gaussian quantum systems}\label{sec:GQS}
We consider the Hilbert space of $M$ harmonic oscillators, or $M$ modes of the electromagnetic radiation, i.e. the irreducible representation of the canonical commutation relations
\begin{equation}\label{eq:CCa}
\left[\hat{a}_i,\;\hat{a}_j^\dag\right]=\delta_{ij}\;\hat{\mathbb{I}}\;,\qquad i,\,j=1,\ldots,\,M\;.
\end{equation}
The operators $\hat{a}_1^\dag\hat{a}_1,\ldots,\,\hat{a}_M^\dag\hat{a}_M$ have integer spectrum and commute.
Their joint eigenbasis is the Fock basis $\left\{|n_1\ldots n_M\rangle\right\}_{n_1,\ldots,\,n_M\in\mathbb{N}}$.
The Hamiltonian
\begin{equation}
\hat{N}=\sum_{i=1}^M\hat{a}_i^\dag\hat{a}_i
\end{equation}
counts the number of excitations, or photons.
We define the quadratures
\begin{equation}
\hat{Q}_i = \frac{\hat{a}_i + \hat{a}_i^\dag}{\sqrt{2}}\;,\qquad\hat{P}_i = \frac{\hat{a}_i - \hat{a}_i^\dag}{\mathrm{i}\sqrt{2}}\;,\qquad i=1,\,\ldots,\,M\;.
\end{equation}
We can collect them in the vector
\begin{equation}
\hat{R}_{2i-1} = \hat{Q}_i\;,\qquad\hat{R}_{2i} = \hat{P}_i\;,\qquad i=1,\,\ldots,\,M\;,
\end{equation}
and \eqref{eq:CCa} becomes
\begin{equation}
\left[\hat{R}_i,\;\hat{R}_{j}\right] = \mathrm{i}\,\Delta_{ij}\,\hat{\mathbb{I}}\;,\qquad i,\,j=1,\,\ldots,\,2M\;,
\end{equation}
where
\begin{equation}
\Delta = \bigoplus_{i=1}^M\left(
                             \begin{array}{cc}
                               0 & 1 \\
                               -1 & 0 \\
                             \end{array}
                           \right)
\end{equation}
is the symplectic form.

\subsection{Quantum Gaussian states}
A quantum Gaussian state is a density operator proportional to the exponential of a quadratic polynomial in the quadratures:
\begin{equation}
\hat{\gamma} = \frac{\exp\left(-\frac{1}{2}\sum_{i,\,j=1}^{2M}h_{ij}\,\hat{R}_i\,\hat{R}_j\right)} {\mathrm{Tr}\exp\left(-\frac{1}{2}\sum_{i,\,j=1}^{2M}h_{ij}\,\hat{R}_i\,\hat{R}_j\right)}\;,
\end{equation}
where $h$ is a positive real $2M\times2M$ matrix.
A quantum Gaussian state is completely identified by its covariance matrix
\begin{equation}
\sigma_{ij} = \frac{1}{2}\,\mathrm{Tr}\left[\left(\hat{R}_i\,\hat{R}_j + \hat{R}_j\,\hat{R}_i\right)\hat{\gamma}\right]\;,\qquad i,\,j=1,\,\ldots,\,2M\;.
\end{equation}
The von Neumann entropy of a quantum Gaussian state is
\begin{equation}
S = \sum_{i=1}^M g\left(\nu_i-\frac{1}{2}\right)\;,
\end{equation}
where
\begin{equation}
g(x)=\left(x+1\right)\ln\left(x+1\right) - x\ln x\;,
\end{equation}
and $\nu_1,\,\ldots,\,\nu_M$ are the symplectic eigenvalues of its covariance matrix $\sigma$, i.e., the absolute values of the eigenvalues of $\sigma\Delta^{-1}$.

\subsection{Coherent states}
The classical phase space associated with a $M$-mode Gaussian quantum system is $\mathbb{C}^M$, and for any $\mathbf{z}\in\mathbb{C}^M$ we define the coherent state
\begin{equation}
|\mathbf{z}\rangle = \mathrm{e}^{-\frac{|\mathbf{z}|^2}{2}}\sum_{n_1,\,\ldots,\,n_M=0}^\infty\frac{z_1^{n_1}\ldots z_M^{n_M}}{\sqrt{n_1!\ldots n_M!}}\;|n_1\ldots n_M\rangle\;.
\end{equation}
Coherent states satisfy the resolution of the identity \cite{holevo2015gaussian}
\begin{equation}\label{eq:complz}
\int_{\mathbb{C}^M}|\mathbf{z}\rangle\langle \mathbf{z}|\,\frac{\mathrm{d}^{2M}z}{\pi^M} = \hat{\mathbb{I}}\;,
\end{equation}
where the integral converges in the weak topology.
The POVM associated with the resolution of the identity \eqref{eq:complz} is called heterodyne measurement \cite{schleich2015quantum}.

\subsection{The Gaussian quantum-limited amplifier}\label{sec:A}
The $M$-mode Gaussian quantum-limited amplifier $\mathcal{A}_{\kappa}$ with amplification parameter $\kappa\ge1$ performs a two-mode squeezing on the input state $\hat{\rho}$ and the vacuum state of an $M$-mode ancillary Gaussian system $E$ with ladder operators $\hat{e}_1,\ldots,\,\hat{e}_M$:
\begin{equation}\label{eq:A}
\mathcal{A}_{\kappa}(\hat{\rho})= \mathrm{Tr}_E\left[\hat{U}_\kappa\left(\hat{\rho}\otimes|\mathbf{0}\rangle\langle\mathbf{0}|\right)\hat{U}_\kappa^\dag\right]\;.
\end{equation}
The squeezing unitary operator
\begin{equation}\label{eq:defUk}
\hat{U}_\kappa=\exp\left(\mathrm{arccosh}\sqrt{\kappa}\sum_{i=1}^M\left(\hat{a}_i^\dag\hat{e}_i^\dag-\hat{a}_i\,\hat{e}_i\right)\right)
\end{equation}
acts on the ladder operators as
\begin{subequations}
\begin{align}
\hat{U}_\kappa^\dag\;\hat{a}_i\;\hat{U}_\kappa &= \sqrt{\kappa}\;\hat{a}_i + \sqrt{\kappa-1}\;\hat{e}_i^\dag\;,\\
\hat{U}_\kappa^\dag\;\hat{e}_i\;\hat{U}_\kappa &= \sqrt{\kappa-1}\;\hat{a}_i^\dag + \sqrt{\kappa}\;\hat{e}_i\;,\qquad i=1,\ldots,\,M\;.
\end{align}
\end{subequations}
The Gaussian quantum-limited amplifier acts on quantum Gaussian states as follows.
Let $A$ and $B$ be Gaussian quantum systems with $M_A$ and $M_B$ modes, respectively.
Let $\hat{\gamma}_{AB}$ be the joint quantum Gaussian state on $AB$ with covariance matrix
\begin{equation}
\sigma_{AB} = \frac{1}{2}\left(
                 \begin{array}{cc}
                   X & Z \\
                   Z^T & Y \\
                 \end{array}
               \right)\;.
\end{equation}
Then, $(\mathcal{A}_\kappa\otimes\mathbb{I}_B)(\hat{\gamma}_{AB})$ is the quantum Gaussian state with covariance matrix
\begin{equation}
\sigma_{AB}' = \frac{1}{2}\left(
                 \begin{array}{cc}
                   \kappa\,X + \left(\kappa-1\right)I_{M_A} & \sqrt{\kappa}\,Z \\
                   \sqrt{\kappa}\,Z^T & Y \\
                 \end{array}
               \right)\;.
\end{equation}

\section{Asymptotic equivalence between heterodyne measurement and amplifier}\label{sec:QA}
In this Section, we extend the asymptotic equivalence between the heterodyne measurement and the quantum-limited amplifier proven in \cite{de2017wehrl} to the scenario with quantum memory.
\begin{theorem}[heterodyne measurement - amplifier equivalence]\label{thm:QA}
Let $A$ be an $M$-mode Gaussian quantum system and $B$ a generic quantum system.
For any concave function $f:[0,1]\to[0,\infty)$ and for any joint quantum state $\hat{\rho}_{AB}$ on $AB$,
\begin{equation}
\int_{\mathbb{C}^M}\mathrm{Tr}_B f\left(\langle\mathbf{z}|\hat{\rho}_{AB}|\mathbf{z}\rangle\right) \ge \limsup_{\kappa\to\infty} \frac{\mathrm{Tr}_{AB} f\left(\kappa^M(\mathcal{A}_\kappa\otimes\mathbb{I}_B)(\hat{\rho}_{AB})\right)}{\kappa^M}\;.
\end{equation}
\end{theorem}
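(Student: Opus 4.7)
The plan is to apply the memory Berezin--Lieb inequality \autoref{thm:LB} to the rescaled amplifier output $\kappa^M(\mathcal{A}_\kappa\otimes\mathbb{I}_B)(\hat{\rho}_{AB})$ and then convert the coherent-state matrix elements that appear on the resulting right-hand side back into matrix elements of the original $\hat{\rho}_{AB}$ by the change of variable $\mathbf{w}=\sqrt{\kappa}\mathbf{y}$. The engine of the argument will be the pointwise identity, valid for every $\kappa\ge 1$,
\begin{equation}
\langle\sqrt{\kappa}\mathbf{y}|_A\,\kappa^M\bigl(\mathcal{A}_\kappa\otimes\mathbb{I}_B\bigr)(\hat{\rho}_{AB})\,|\sqrt{\kappa}\mathbf{y}\rangle_A \;=\; \langle\mathbf{y}|_A\,\hat{\rho}_{AB}\,|\mathbf{y}\rangle_A,
\end{equation}
as positive operators on $B$. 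In particular, the statement of \autoref{thm:QA} turns out not to be truly asymptotic: the inequality will hold for every finite $\kappa\ge 1$, so the $\limsup$ in the statement is dominated immediately.

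To establish the identity I would first reduce to the scalar case by sandwiching both sides with an arbitrary $|\phi\rangle_B$, which replaces $\hat{\rho}_{AB}$ by the positive trace-class operator $\langle\phi|_B\hat{\rho}_{AB}|\phi\rangle_B$ on $A$, and then verify the single-mode claim $\langle\sqrt{\kappa}\mathbf{y}|\kappa^M\mathcal{A}_\kappa(\hat{X})|\sqrt{\kappa}\mathbf{y}\rangle=\langle\mathbf{y}|\hat{X}|\mathbf{y}\rangle$ for any trace-class $\hat{X}$ on $A$. This would follow from the displacement-covariance identity $\kappa^M\mathcal{A}_\kappa(|\mathbf{z}\rangle\langle\mathbf{z}|)=\hat{D}(\sqrt{\kappa}\mathbf{z})(1-\kappa^{-1})^{\hat{N}}\hat{D}^{\dagger}(\sqrt{\kappa}\mathbf{z})$, where $\hat{D}$ denotes the displacement operator, together with the elementary computation $\langle\mathbf{w}|(1-\kappa^{-1})^{\hat{N}}|\mathbf{w}\rangle=\exp(-|\mathbf{w}|^2/\kappa)$: plugging $\mathbf{w}=\sqrt{\kappa}(\mathbf{y}-\mathbf{z})$ gives $\exp(-|\mathbf{y}-\mathbf{z}|^2)=|\langle\mathbf{y}|\mathbf{z}\rangle|^2$ on both sides when $\hat{X}=|\mathbf{z}\rangle\langle\mathbf{z}|$. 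Extension from coherent-state projectors to an arbitrary trace-class $\hat{X}$ would then proceed by trace-norm density of their linear span together with continuity of both sides in the trace norm (since $\mathcal{A}_\kappa$ is CPTP the amplifier side has trace-norm bound $\kappa^M$).

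With the pointwise identity in hand I would apply \autoref{thm:LB} to the positive operator $\hat{X}_{AB}=\kappa^M(\mathcal{A}_\kappa\otimes\mathbb{I}_B)(\hat{\rho}_{AB})$, perform the substitution $\mathbf{w}=\sqrt{\kappa}\mathbf{y}$ in the resulting integral (contributing a Jacobian factor $\kappa^M$), and cancel the extra $\kappa^M$ with the one already on the right-hand side of \autoref{thm:QA}. The main obstacle is \autoref{thm:LB} itself, the genuinely new Berezin--Lieb inequality in the memory setting; once that is granted, the present argument is essentially a covariance computation and a change of variable. A subsidiary technical point is to confirm that the spectrum of $\kappa^M(\mathcal{A}_\kappa\otimes\mathbb{I}_B)(\hat{\rho}_{AB})$ lies in the domain of $f$, which is already indicated by the pointwise identity through the bound by $1$ on all coherent-state diagonal matrix elements of the rescaled amplified state.
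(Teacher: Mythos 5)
Your proposal follows the same route as the paper's proof: apply \autoref{thm:LB} to $\hat{X}=\kappa^M(\mathcal{A}_\kappa\otimes\mathbb{I}_B)(\hat{\rho}_{AB})$, convert the coherent-state diagonal blocks of the amplified state back to those of $\hat{\rho}_{AB}$ via the covariance identity $\kappa^M\langle\mathbf{z}|(\mathcal{A}_\kappa\otimes\mathbb{I}_B)(\hat{\rho}_{AB})|\mathbf{z}\rangle=\langle\mathbf{z}/\sqrt{\kappa}|\hat{\rho}_{AB}|\mathbf{z}/\sqrt{\kappa}\rangle$, and rescale the integration variable. The paper imports that identity from \cite{de2017wehrl} (Lemma 4); your displacement-covariance derivation of it, via $\kappa^M\mathcal{A}_\kappa(|\mathbf{z}\rangle\langle\mathbf{z}|)=\hat{D}(\sqrt{\kappa}\mathbf{z})(1-\kappa^{-1})^{\hat{N}}\hat{D}^{\dagger}(\sqrt{\kappa}\mathbf{z})$ and trace-norm density of the span of coherent projectors, is a correct reconstruction. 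Your observation that the inequality holds for every finite $\kappa\ge1$, the $\limsup$ being taken only at the end, is exactly how the paper proceeds.

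The one step that does not work as you describe it is the ``subsidiary technical point'': verifying $0\le\kappa^M(\mathcal{A}_\kappa\otimes\mathbb{I}_B)(\hat{\rho}_{AB})\le\hat{\mathbb{I}}_{AB}$, which is a genuine hypothesis of \autoref{thm:LB} and is needed for $f$, defined only on $[0,1]$, to be applicable to this operator at all. Knowing that all coherent-state diagonal blocks $\langle\mathbf{z}|\hat{X}|\mathbf{z}\rangle$ are bounded by $\hat{\mathbb{I}}_B$ does \emph{not} bound the operator itself by the identity: already for $M=1$ and trivial $B$, the positive operator $\mathrm{e}\,|1\rangle\langle1|$ satisfies $\langle z|\,\mathrm{e}\,|1\rangle\langle1|\,z\rangle=\mathrm{e}\,|z|^2\mathrm{e}^{-|z|^2}\le1$ for every $z\in\mathbb{C}$, yet it has the eigenvalue $\mathrm{e}>1$. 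The correct argument, which the paper uses, is that the map $\kappa^M\mathcal{A}_\kappa$ is unital (Theorem 9 of \cite{ivan2011operator}); positivity then gives
\begin{equation}
\kappa^M(\mathcal{A}_\kappa\otimes\mathbb{I}_B)(\hat{\rho}_{AB})\le\kappa^M(\mathcal{A}_\kappa\otimes\mathbb{I}_B)\left(\hat{\mathbb{I}}_{AB}\right)=\hat{\mathbb{I}}_{AB}\;.
\end{equation}
With that replacement your argument is complete and coincides with the paper's.
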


\subsection{Proof of \autoref{thm:QA}}
The proof is based on the following generalization of the Berezin-Lieb inequality \cite{berezin1972covariant} to the scenario with quantum memory.
\begin{theorem}[Berezin-Lieb inequality with quantum memory]\label{thm:LB}
Let $A$ be an $M$-mode Gaussian quantum system and $B$ a generic quantum system.
Then, for any trace-class operator $\hat{X}$ on $AB$ with $0\le\hat{X}\le\hat{\mathbb{I}}_{AB}$ and any concave function $f:[0,1]\to[0,\infty)$,
\begin{equation}
\int_{\mathbb{C}^M}\mathrm{Tr}_B f\left(\langle\mathbf{z}|\hat{X}|\mathbf{z}\rangle\right)\frac{\mathrm{d}^{2M}z}{\pi^M} \ge \mathrm{Tr}_{AB}f\left(\hat{X}\right)\;.
\end{equation}
\end{theorem}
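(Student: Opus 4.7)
The plan is to reduce the operator-valued statement to a pointwise application of scalar Jensen's inequality, following the spirit of the classical Berezin--Lieb proof but with the eigenvalues of the partial sandwich $\hat Y(\mathbf z):=\langle\mathbf z|\hat X|\mathbf z\rangle_A$ on $B$ playing the role that the scalar coherent-state matrix elements of $\hat X$ play in the unconditioned case. The heterodyne resolution of identity \eqref{eq:complz} will close off the final integration over $\mathbf z$.

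First I would spectrally decompose $\hat X=\sum_n\lambda_n|\psi_n\rangle\langle\psi_n|$ on $AB$, with $\{|\psi_n\rangle\}$ an orthonormal basis of the joint Hilbert space and $\lambda_n\in[0,1]$. Taking partial inner product with $|\mathbf z\rangle_A$ produces the sub-normalised vectors $|\tilde\psi_n(\mathbf z)\rangle:=\langle\mathbf z|\psi_n\rangle_A$ on $B$, satisfying
\begin{equation}
\hat Y(\mathbf z)=\sum_n\lambda_n\,|\tilde\psi_n(\mathbf z)\rangle\langle\tilde\psi_n(\mathbf z)|,\qquad \sum_n|\tilde\psi_n(\mathbf z)\rangle\langle\tilde\psi_n(\mathbf z)|=\hat{\mathbb{I}}_B,
\end{equation}
the second identity being a consequence of the completeness $\sum_n|\psi_n\rangle\langle\psi_n|=\hat{\mathbb{I}}_{AB}$ together with $\langle\mathbf z|\mathbf z\rangle=1$.

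Next, since $\hat Y(\mathbf z)$ is positive and trace class on $B$, I would diagonalise it as $\hat Y(\mathbf z)=\sum_k\mu_k(\mathbf z)|e_k(\mathbf z)\rangle\langle e_k(\mathbf z)|$ and set $p_{nk}(\mathbf z):=|\langle e_k(\mathbf z)|\tilde\psi_n(\mathbf z)\rangle|^2$. Then $\mu_k(\mathbf z)=\sum_n\lambda_n\,p_{nk}(\mathbf z)$, while the POVM-like identity above yields $\sum_n p_{nk}(\mathbf z)=\langle e_k(\mathbf z)|\hat{\mathbb{I}}_B|e_k(\mathbf z)\rangle=1$, so $(p_{nk}(\mathbf z))_n$ is a probability distribution for every $(\mathbf z,k)$. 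Scalar concavity of $f$ and Jensen's inequality then give $f(\mu_k(\mathbf z))\ge\sum_n p_{nk}(\mathbf z)\,f(\lambda_n)$, and summing over $k$ while using $\sum_k p_{nk}(\mathbf z)=\|\tilde\psi_n(\mathbf z)\|_B^2$ produces the pointwise bound
\begin{equation}
\mathrm{Tr}_B f(\hat Y(\mathbf z))\ \ge\ \sum_n f(\lambda_n)\,\|\tilde\psi_n(\mathbf z)\|_B^2.
\end{equation}

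The proof concludes by integrating in $\mathbf z$, exchanging sum and integral by Tonelli (all terms are nonnegative), and invoking the resolution of identity \eqref{eq:complz}, which gives $\int_{\mathbb{C}^M}\|\tilde\psi_n(\mathbf z)\|_B^2\,\mathrm d^{2M}z/\pi^M=\langle\psi_n|\hat{\mathbb{I}}_{AB}|\psi_n\rangle=1$, so the right-hand side becomes $\sum_n f(\lambda_n)=\mathrm{Tr}_{AB}f(\hat X)$. The main delicate point is not really an obstacle but rather a measure-theoretic check: continuity of $\mathbf z\mapsto\hat Y(\mathbf z)$ in trace norm makes the scalar functions $\mathrm{Tr}_B f(\hat Y(\mathbf z))$ and $\|\tilde\psi_n(\mathbf z)\|_B^2$ measurable, and in the edge case $\mathrm{Tr}_{AB}f(\hat X)=+\infty$ (possible if $f(0)>0$ in infinite dimensions) the left-hand side is also infinite and the inequality is automatic.
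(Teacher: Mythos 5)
Your proposal is correct and follows essentially the same route as the paper: diagonalize $\hat X$, observe that the partial overlaps $\langle\mathbf z|\psi_n\rangle$ form a resolution of $\hat{\mathbb{I}}_B$, apply scalar Jensen's inequality in the eigenbasis of $\langle\mathbf z|\hat X|\mathbf z\rangle$, and integrate using the coherent-state completeness relation. The only cosmetic difference is that the paper factors the Jensen step out into a standalone trace-inequality lemma (\autoref{lem:jensen}) while you inline it; the added measurability remark is a harmless extra.
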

\begin{proof}
Let us diagonalize $\hat{X}$:
\begin{equation}
\hat{X} = \sum_{k=0}^\infty x_k\,|\psi_k\rangle\langle\psi_k|\;, \quad\langle\psi_k|\psi_l\rangle=\delta_{kl}\quad\forall\;k,\,l\in\mathbb{N}\;, \quad\sum_{k=0}^\infty|\psi_k\rangle\langle\psi_k|=\hat{\mathbb{I}}_{AB}\;.
\end{equation}
We have
\begin{equation}
0\le x_k \le 1\qquad\forall\;k\in\mathbb{N}\;,\qquad \sum_{k=0}^\infty x_k<\infty\;.
\end{equation}
From the completeness of the $|\psi_k\rangle$ we have for any $\mathbf{z}\in\mathbb{C}^M$
\begin{equation}
\sum_{k=0}^\infty \langle\mathbf{z}|\psi_k\rangle\langle\psi_k|\mathbf{z}\rangle = \langle\mathbf{z}|\hat{\mathbb{I}}_{AB}|\mathbf{z}\rangle = \hat{\mathbb{I}}_B\;.
\end{equation}
We can then apply \autoref{lem:jensen} to $|\phi_k\rangle = \langle\mathbf{z}|\psi_k\rangle$ and get
\begin{align}
\mathrm{Tr}_B f\left(\langle\mathbf{z}|\hat{X}|\mathbf{z}\rangle\right) &= \mathrm{Tr}_B f\left(\sum_{k=0}^\infty x_k\,\langle\mathbf{z}|\psi_k\rangle \langle\psi_k|\mathbf{z}\rangle\right) \ge \sum_{k=0}^\infty\langle\psi_k|\mathbf{z}\rangle\langle\mathbf{z}|\psi_k\rangle f(x_k)\nonumber\\
&=\mathrm{Tr}_B\langle\mathbf{z}|f\left(\hat{X}\right)|\mathbf{z}\rangle\;.
\end{align}
Finally, from the completeness relation \eqref{eq:complz} we have
\begin{equation}
\int_{\mathbb{C}^M}\mathrm{Tr}_B f\left(\langle\mathbf{z}|\hat{X}|\mathbf{z}\rangle\right)\frac{\mathrm{d}^{2M}z}{\pi^M} \ge \int_{\mathbb{C}^M}\mathrm{Tr}_B\langle\mathbf{z}|f\left(\hat{X}\right)|\mathbf{z}\rangle\,\frac{\mathrm{d}^{2M}z}{\pi^M} = \mathrm{Tr}_{AB}f\left(\hat{X}\right)\;.
\end{equation}
\end{proof}
From \cite{ivan2011operator}, Theorem 9, the map $\kappa^M\mathcal{A}_\kappa$ is unital, and
\begin{equation}
0 \le \kappa^M(\mathcal{A}_\kappa\otimes\mathbb{I}_B)(\hat{\rho}_{AB}) \le \kappa^M(\mathcal{A}_\kappa\otimes\mathbb{I}_B)\left(\hat{\mathbb{I}}_{AB}\right) = \hat{\mathbb{I}}_{AB}\;.
\end{equation}
We can then apply \autoref{thm:LB} to $\hat{X}=\kappa^M(\mathcal{A}_\kappa\otimes\mathbb{I}_B)(\hat{\rho}_{AB})$ and get
\begin{equation}\label{eq:1}
\int_{\mathbb{C}^M}\mathrm{Tr}_B f\left(\kappa^M\langle\mathbf{z}|(\mathcal{A}_\kappa\otimes\mathbb{I}_B)(\hat{\rho}_{AB})|\mathbf{z}\rangle\right)\frac{\mathrm{d}^{2M}z}{\pi^M} \ge \mathrm{Tr}_{AB} f\left(\kappa^M(\mathcal{A}_\kappa\otimes\mathbb{I}_B)(\hat{\rho}_{AB})\right)\;.
\end{equation}
Since for any $\mathbf{z}\in\mathbb{C}^M$
\begin{equation}\label{eq:adj}
\kappa^M\langle\mathbf{z}|(\mathcal{A}_\kappa\otimes\mathbb{I}_B)(\hat{\rho}_{AB})|\mathbf{z}\rangle = \langle\mathbf{z}/\sqrt{\kappa}|\hat{\rho}_{AB}|\mathbf{z}/\sqrt{\kappa}\rangle
\end{equation}
(\cite{de2017wehrl}, Lemma 4), \eqref{eq:1} becomes
\begin{equation}
\int_{\mathbb{C}^M}\mathrm{Tr}_B f\left(\langle\mathbf{z}|\hat{\rho}_{AB}|\mathbf{z}\rangle\right)\frac{\mathrm{d}^{2M}z}{\pi^M} \ge \frac{1}{\kappa^M}\mathrm{Tr}_{AB} f\left(\kappa^M(\mathcal{A}_\kappa\otimes\mathbb{I}_B)(\hat{\rho}_{AB})\right)\;,
\end{equation}
and the claim follows taking the limit $\kappa\to\infty$.

\section{Bipartite quantum memory uncertainty relation for the Wehrl entropy}\label{sec:bi}
\begin{theorem}[bipartite quantum memory uncertainty relation for the Wehrl entropy]\label{thm:ineq}
Let $A$ be an $M$-mode Gaussian quantum system and $B$ a generic quantum system.
Let $\hat{\rho}_{AB}$ be a joint quantum state on $AB$ such that its marginal $\hat{\rho}_A=\mathrm{Tr}_B\hat{\rho}_{AB}$ has finite average energy, and its marginal $\hat{\rho}_B=\mathrm{Tr}_A\hat{\rho}_{AB}$ has finite von Neumann entropy.
Let $\hat{\rho}_{ZB}$ be the probability measure on $\mathbb{C}^M$ taking values in positive operators on $B$ associated with the heterodyne measurement on $A$:
\begin{equation}\label{eq:defrhoZ}
\mathrm{d}\hat{\rho}_{ZB}(\mathbf{z}) = \langle\mathbf{z}|\hat{\rho}_{AB}|\mathbf{z}\rangle\,\frac{\mathrm{d}^{2M}z}{\pi^M}\;,\qquad \mathbf{z}\in\mathbb{C}^M\;.
\end{equation}
Then, the following bipartite quantum memory uncertainty relation holds:
\begin{equation}\label{eq:biEUR}
S(Z|B)_{\hat{\rho}_{ZB}} \ge M\ln\left(\exp\frac{S(A|B)_{\hat{\rho}_{AB}}}{M}+1\right) \ge 0\;,
\end{equation}
where
\begin{equation}\label{eq:defZ|B}
S(Z|B)_{\hat{\rho}_{ZB}} = S(\hat{\rho}_{ZB}) - S(\hat{\rho}_B)
\end{equation}
is the Shannon differential entropy of the outcome $Z$ of the heterodyne measurement performed on $A$ conditioned on the quantum system $B$, and
\begin{equation}
S(\hat{\rho}_{ZB}) = -\int_{\mathbb{C}^M}\mathrm{Tr}_B\left[\langle\mathbf{z}|\hat{\rho}_{AB}|\mathbf{z}\rangle\ln\langle\mathbf{z}|\hat{\rho}_{AB}|\mathbf{z}\rangle\right] \frac{\mathrm{d}^{2M}z}{\pi^M}
\end{equation}
is the joint entropy of $Z$ and $B$.
\end{theorem}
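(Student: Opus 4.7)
The plan is to combine the heterodyne--amplifier asymptotic equivalence (\autoref{thm:QA}) with the conditional Entropy Power Inequality of \cite{de2018conditional,koenig2015conditional}. The quantum-limited amplifier plays the role of a bridge between the conditional Shannon entropy of the heterodyne outcome on the left of \eqref{eq:biEUR} and the conditional von Neumann entropy on the right.

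First I apply \autoref{thm:QA} with the concave function $f(x)=-x\ln x$ on $[0,1]$. Writing $\hat{\rho}_{AB}^{(\kappa)} := (\mathcal{A}_\kappa\otimes\mathbb{I}_B)(\hat{\rho}_{AB})$ and expanding $\ln\bigl(\kappa^M\hat{\rho}_{AB}^{(\kappa)}\bigr) = M\ln\kappa + \ln\hat{\rho}_{AB}^{(\kappa)}$, the left-hand side of \autoref{thm:QA} becomes precisely $S(\hat{\rho}_{ZB})$, while the right-hand side collapses to $\limsup_{\kappa\to\infty}\bigl[S(\hat{\rho}_{AB}^{(\kappa)}) - M\ln\kappa\bigr]$. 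Subtracting $S(\hat{\rho}_B)$, which is finite by hypothesis and invariant under the action of $\mathcal{A}_\kappa$ on $A$, from both sides yields
\begin{equation*}
S(Z|B)_{\hat{\rho}_{ZB}} \ge \limsup_{\kappa\to\infty}\bigl[S(A|B)_{\hat{\rho}_{AB}^{(\kappa)}} - M\ln\kappa\bigr].
\end{equation*}

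Second, I bound $S(A|B)_{\hat{\rho}_{AB}^{(\kappa)}}$ using the conditional EPI. By definition \eqref{eq:A}, $\mathcal{A}_\kappa$ is the partial trace over a vacuum ancilla $E$ of the two-mode squeezing $\hat{U}_\kappa$ applied to $\hat{\rho}_{AB}\otimes|\mathbf{0}\rangle\langle\mathbf{0}|_E$, so $A$ and $E$ are trivially conditionally independent given $B$. The conditional EPI for the squeezing then produces
\begin{equation*}
\exp\frac{S(A|B)_{\hat{\rho}_{AB}^{(\kappa)}}}{M} \ge \kappa\,\exp\frac{S(A|B)_{\hat{\rho}_{AB}}}{M} + (\kappa-1),
\end{equation*}
where I used $S(E)=0$ for the pure vacuum. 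Taking logarithms, subtracting $M\ln\kappa$, and letting $\kappa\to\infty$ produces exactly $M\ln\bigl(\exp(S(A|B)/M)+1\bigr)$, which establishes the first inequality in \eqref{eq:biEUR}. The second inequality is immediate since $\exp(S(A|B)/M)+1 \ge 1$.

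The main technical obstacle lies in the careful application of the conditional EPI to the Stinespring dilation of $\mathcal{A}_\kappa$: one must identify the vacuum ancilla with the second input to the squeezing, and reduce the general bound of \cite{de2018conditional,koenig2015conditional} to the stated asymmetric form with the specific $\kappa$ and $\kappa-1$ coefficients. The finite-energy hypothesis on $\hat{\rho}_A$, together with the finite-entropy hypothesis on $\hat{\rho}_B$, guarantees that all conditional entropies appearing above are well defined, so that the limit and subtraction operations are legitimate.
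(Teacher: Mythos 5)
Your proposal is correct and follows essentially the same route as the paper: apply \autoref{thm:QA} with $f(x)=-x\ln x$, subtract $S(\hat{\rho}_B)$, invoke the conditional Entropy Power Inequality of \cite{de2018conditional} in exactly the form \eqref{eq:EPI}, and take the limit $\kappa\to\infty$. The only difference is presentational: you spell out why the conditional EPI applied to the Stinespring dilation of $\mathcal{A}_\kappa$ yields the coefficients $\kappa$ and $\kappa-1$, whereas the paper simply cites the inequality.
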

\begin{remark}
If $B$ is the trivial system, \eqref{eq:biEUR} becomes
\begin{equation}\label{eq:monoEUR}
S(\hat{\rho}_Z) \ge M\ln\left(\exp\frac{S(\hat{\rho}_A)}{M}+1\right) \ge 0\;.
\end{equation}
However, the optimal inequality satisfied by the unconditioned Wehrl entropy is \cite{de2017wehrl}
\begin{equation}
S(\hat{\rho}_Z) \ge M\ln\left(g^{-1}\left(\frac{S(\hat{\rho}_A)}{M}\right)+1\right)+M \ge M\;,
\end{equation}
that is strictly stronger than \eqref{eq:monoEUR}.
The presence of the quantum memory then both changes the form of the optimal inequality and reduces the minimum uncertainty: while the minimum Wehrl entropy is $M$, the minimum conditional Wehrl entropy is $0$.
\end{remark}
\begin{corollary}[entanglement witnessing]\label{cor:ent}
Under the hypotheses of \autoref{thm:ineq}, if the quantum state $\hat{\rho}_{AB}$ is separable, the following bipartite quantum memory uncertainty relation holds:
\begin{equation}
S(Z|B)_{\hat{\rho}_{ZB}} \ge M\ln 2\;.
\end{equation}
\end{corollary}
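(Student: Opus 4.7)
The plan is to deduce the corollary directly from Theorem \ref{thm:ineq} by showing that separability forces the conditional von Neumann entropy $S(A|B)_{\hat{\rho}_{AB}}$ to be non-negative, and then by monotonicity of the function $x \mapsto M\ln(\exp(x/M)+1)$ the uncertainty bound degrades no lower than $M\ln 2$.

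Concretely, I would first recall the well-known fact that for any separable bipartite state the conditional entropy is non-negative. Writing $\hat{\rho}_{AB} = \sum_i p_i \,\hat{\rho}_i^A \otimes \hat{\rho}_i^B$, one introduces the classical extension $\hat{\rho}_{ABX} = \sum_i p_i \,\hat{\rho}_i^A \otimes \hat{\rho}_i^B \otimes |i\rangle\langle i|_X$ on a classical register $X$. A direct computation gives $S(A|BX)_{\hat{\rho}_{ABX}} = \sum_i p_i \, S(\hat{\rho}_i^A) \ge 0$, and strong subadditivity (equivalently, monotonicity of conditional entropy under discarding the system $X$) yields $S(A|B)_{\hat{\rho}_{AB}} \ge S(A|BX)_{\hat{\rho}_{ABX}} \ge 0$. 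The hypotheses of Theorem \ref{thm:ineq} (finite average energy on $A$, finite entropy on $B$) are inherited by $\hat{\rho}_{AB}$, so all entropies involved are well-defined.

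Once this is in place, the corollary is immediate: plugging $S(A|B)_{\hat{\rho}_{AB}} \ge 0$ into the uncertainty relation \eqref{eq:biEUR} of Theorem \ref{thm:ineq} gives
\begin{equation}
S(Z|B)_{\hat{\rho}_{ZB}} \;\ge\; M\ln\!\left(\exp\frac{S(A|B)_{\hat{\rho}_{AB}}}{M}+1\right) \;\ge\; M\ln(1+1) \;=\; M\ln 2\,.
\end{equation}

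The main subtlety, rather than an obstacle, is the handling of the infinite-dimensional setting: one should verify that the decomposition into product states can be taken with countably many terms of sufficiently regular $\hat{\rho}_i^A$ so that the above sums and conditional entropies are well-defined (in particular that $\sum_i p_i S(\hat{\rho}_i^A)$ makes sense and that the approximation argument for $S(A|B)$ from $S(A|BX)$ via strong subadditivity goes through). This can be addressed by a standard truncation/continuity argument, approximating $\hat{\rho}_{AB}$ by separable states with finite-rank marginals and passing to the limit using lower semicontinuity of the relative entropy. Beyond this technical point, the corollary is a one-line consequence of Theorem \ref{thm:ineq}.
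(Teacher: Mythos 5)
Your proposal is correct and follows exactly the paper's route: the paper also deduces the corollary by noting that separability implies $S(A|B)_{\hat{\rho}_{AB}}\ge 0$ and then substituting into \eqref{eq:biEUR}. Your elaboration of why separable states have non-negative conditional entropy (via the classical extension and strong subadditivity) is a correct expansion of a fact the paper simply asserts.
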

\begin{proof}
Since $\hat{\rho}_{AB}$ is separable, we have $S(A|B)_{\hat{\rho}_{AB}}\ge0$.
The claim then follows from \autoref{thm:ineq}.
\end{proof}

\subsection{Proof of \autoref{thm:ineq}}
Applying \autoref{thm:QA} to $f(x)=-x\ln x$ we get
\begin{equation}
S(\hat{\rho}_{ZB}) \ge \limsup_{\kappa\to\infty}\left(S((\mathcal{A}_\kappa\otimes\mathbb{I}_B)(\hat{\rho}_{AB})) - M\ln\kappa\right)\;.
\end{equation}
Subtracting $S(\hat{\rho}_B)$ on both sides we get
\begin{equation}
S(Z|B)_{\hat{\rho}_{ZB}} \ge \limsup_{\kappa\to\infty}\left(S(A|B)_{(\mathcal{A}_\kappa\otimes\mathbb{I}_B)(\hat{\rho}_{AB})} - M\ln\kappa\right)\;.
\end{equation}
From the Entropy Power Inequality with quantum memory \cite{de2018conditional},
\begin{equation}\label{eq:EPI}
S(A|B)_{(\mathcal{A}_\kappa\otimes\mathbb{I}_B)(\hat{\rho}_{AB})} \ge M\,\ln\left(\kappa\exp\frac{S(A|B)_{\hat{\rho}_{AB}}}{M} + \kappa-1\right)\;.
\end{equation}
We then have
\begin{align}
S(Z|B)_{\hat{\rho}_{ZB}} &\ge M\limsup_{\kappa\to\infty}\ln\left(\exp\frac{S(A|B)_{\hat{\rho}_{AB}}}{M}+1-\frac{1}{\kappa}\right)\nonumber\\
&= M\ln\left(\exp\frac{S(A|B)_{\hat{\rho}_{AB}}}{M}+1\right)\;.
\end{align}

\section{Tripartite quantum memory uncertainty relation for the Wehrl entropy}\label{sec:tri}
\begin{theorem}[tripartite quantum memory uncertainty relation for the Wehrl entropy]\label{thm:main}
Let $A$ be an $M$-mode Gaussian quantum system, and let $B$, $C$ be arbitrary quantum systems.
Let $\hat{\rho}_{ABC}$ be a joint quantum state on $ABC$ such that its marginal $\hat{\rho}_A=\mathrm{Tr}_{BC}\hat{\rho}_{ABC}$ has finite average energy, and its marginals $\hat{\rho}_B=\mathrm{Tr}_{AC}\hat{\rho}_{ABC}$ and $\hat{\rho}_C=\mathrm{Tr}_{AB}\hat{\rho}_{ABC}$ have finite von Neumann entropy.
Let $\hat{\rho}_{ZBC}$ be the probability measure on $\mathbb{C}^M$ taking values in positive operators on $BC$ associated with the heterodyne measurement on $A$:
\begin{equation}
\mathrm{d}\hat{\rho}_{ZBC}(\mathbf{z}) = \langle\mathbf{z}|\hat{\rho}_{ABC}|\mathbf{z}\rangle\,\frac{\mathrm{d}^{2M}z}{\pi^M}\;,\qquad \mathbf{z}\in\mathbb{C}^M\;.
\end{equation}
Then, the following tripartite quantum memory uncertainty relation holds:
\begin{equation}\label{eq:triEUR}
S(Z|B)_{\hat{\rho}_{ZB}} + S(Z|C)_{\hat{\rho}_{ZC}} \ge M\ln 4\;,
\end{equation}
where $S(Z|B)_{\hat{\rho}_{ZB}}$ and $S(Z|C)_{\hat{\rho}_{ZC}}$ are defined as in \eqref{eq:defZ|B}.
\end{theorem}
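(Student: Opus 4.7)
The plan is to reduce the tripartite inequality to two applications of the bipartite inequality \autoref{thm:ineq} via a purification. I would let $|\psi\rangle_{ABCD}$ be a purification of $\hat{\rho}_{ABC}$ on an auxiliary system $D$; since the heterodyne measurement acts only on $A$, its outcome $Z$ together with $B$, $C$, $BD$ and $BCD$ are consistent marginals of a single joint $\hat{\rho}_{ZBCD}$, so the quantities $S(Z|B)$ and $S(Z|C)$ appearing in the theorem are untouched by passing to the purification.

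Next I would apply \autoref{thm:ineq} twice. With memory $C$ it gives $S(Z|C)\ge M\ln(\exp(S(A|C)/M)+1)$, and with the enlarged memory $BD$ it gives $S(Z|BD)\ge M\ln(\exp(S(A|BD)/M)+1)$. The decisive step is a purity identity: because $|\psi\rangle_{ABCD}$ is pure, one has $S(ABD)=S(C)$ and $S(BD)=S(AC)$, so
\[
S(A|BD)+S(A|C)=\bigl(S(ABD)-S(BD)\bigr)+\bigl(S(AC)-S(C)\bigr)=0,
\]
and the two conditional entropies of $A$ are exact opposites. Setting $a=S(A|BD)/M$ so that $S(A|C)/M=-a$, the elementary bound $(e^{a}+1)(e^{-a}+1)=2+e^{a}+e^{-a}\ge 4$ (AM-GM on $e^{\pm a}$, with equality at $a=0$) combined with the two bipartite estimates yields $S(Z|BD)+S(Z|C)\ge M\ln 4$.

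To finish, I would dispose of the auxiliary memory $D$: strong subadditivity applied to the classical-quantum state $\hat{\rho}_{ZBD}$ gives $I(Z:D|B)\ge 0$, i.e.\ $S(Z|B)\ge S(Z|BD)$, which chains with the previous inequality to deliver $S(Z|B)+S(Z|C)\ge M\ln 4$. The only real obstacle I foresee is checking the hypotheses of \autoref{thm:ineq} for the enlarged memory $BD$, namely that $\hat{\rho}_{BD}$ has finite von Neumann entropy; this follows from purity via $S(\hat{\rho}_{BD})=S(\hat{\rho}_{AC})\le S(\hat{\rho}_A)+S(\hat{\rho}_C)$, where $S(\hat{\rho}_A)$ is finite because $\hat{\rho}_A$ has finite average energy and $S(\hat{\rho}_C)$ is finite by hypothesis. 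Beyond this routine bookkeeping, the argument is a clean assembly of a purification, two invocations of the bipartite bound, one AM-GM estimate and a single use of strong subadditivity; the sharpness of the constant $M\ln 4$ is consistent with the remark in the introduction that the tripartite minimizer is the purification of the bipartite minimizer, for which $S(A|BD)=S(A|C)=0$.
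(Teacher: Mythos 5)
Your proposal is correct, but it reaches \eqref{eq:triEUR} by a genuinely different route than the paper. The paper first replaces $C$ by $C'=CR$ (with $R$ a purifying system) so that $\hat{\rho}_{ABC}$ may be assumed pure, then exploits the purity of the post-measurement states $\hat{\rho}_{BC|Z=\mathbf{z}}$ to get $S(B|Z)=S(C|Z)$ and hence the exact identity $S(Z|C)=S(Z|B)-S(A|B)$; a \emph{single} application of \autoref{thm:ineq} then gives $S(Z|B)+S(Z|C)=2S(Z|B)-S(A|B)\ge M\ln\left(2+2\cosh\frac{S(A|B)}{M}\right)\ge M\ln 4$. You instead apply \autoref{thm:ineq} \emph{twice}, to the complementary memories $C$ and $BD$, use the purification duality $S(A|BD)=-S(A|C)$ at the level of the von Neumann conditional entropies (rather than purity of the post-measurement states), combine via $\left(\mathrm{e}^{a}+1\right)\left(\mathrm{e}^{-a}+1\right)=2+2\cosh a\ge 4$, and discard $D$ by data processing. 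The two arguments are close cousins --- the same function $\ln\left(2+2\cosh a\right)$ appears in both --- but they trade different things: the paper's version yields the intermediate bound \eqref{eq:proofbi} in terms of $S(A|B)$ of the original state, which is reused verbatim in the optimality proof of \autoref{thm:optimtri}, whereas your version is the standard bipartite-to-tripartite lift from the entropic-uncertainty literature, avoids any analysis of the conditional states $\hat{\rho}_{BC|Z=\mathbf{z}}$, and would generalize immediately to any bipartite bound of the form $S(Z|B)\ge\phi\left(S(A|B)\right)$ with $\phi(a)+\phi(-a)$ bounded below. Your bookkeeping of the hypotheses (finiteness of $S(\hat{\rho}_{BD})$ via $S(\hat{\rho}_{BD})=S(\hat{\rho}_{AC})\le S(\hat{\rho}_A)+S(\hat{\rho}_C)$, with $S(\hat{\rho}_A)<\infty$ from the finite-energy assumption) is also sound.
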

\begin{proof}
Let us first prove that we can assume $\hat{\rho}_{ABC}$ pure.
Let $\hat{\rho}_{ABCR}$ be a purification of $\hat{\rho}_{ABC}$.
We have from the data-processing inequality for the quantum conditional entropy
\begin{equation}
S(Z|C)_{\hat{\rho}_{ZC}} \ge S(Z|CR)_{\hat{\rho}_{ZCR}}\;.
\end{equation}
Defining $C'=CR$, we have
\begin{equation}
S(Z|B)_{\hat{\rho}_{ZB}} + S(Z|C)_{\hat{\rho}_{ZC}} \ge S(Z|B)_{\hat{\rho}_{ZB}} + S(Z|C')_{\hat{\rho}_{ZC'}}\;,
\end{equation}
and we can then assume $\hat{\rho}_{ABC}$ pure.

For any $\mathbf{z}\in\mathbb{C}^M$, the state
\begin{equation}
\hat{\rho}_{BC|Z=\mathbf{z}} = \frac{\langle\mathbf{z}|\hat{\rho}_{ABC}|\mathbf{z}\rangle}{\langle\mathbf{z}|\hat{\rho}_{A}|\mathbf{z}\rangle}
\end{equation}
is also pure, hence
\begin{equation}
S(\hat{\rho}_{B|Z=\mathbf{z}}) = S(\hat{\rho}_{C|Z=\mathbf{z}})\;,
\end{equation}
and
\begin{equation}
S(B|Z)_{\hat{\rho}_{ZB}} = S(C|Z)_{\hat{\rho}_{ZC}}\;.
\end{equation}
We then have
\begin{align}
S(Z|C)_{\hat{\rho}_{ZC}} &= S(C|Z)_{\hat{\rho}_{ZC}} + S(\hat{\rho}_Z) - S(\hat{\rho}_C) = S(B|Z)_{\hat{\rho}_{ZB}} + S(\hat{\rho}_Z) - S(\hat{\rho}_{AB})\nonumber\\
&= S(Z|B)_{\hat{\rho}_{ZB}} - S(A|B)_{\hat{\rho}_{AB}}\;.
\end{align}
Finally, \autoref{thm:ineq} implies
\begin{align}\label{eq:proofbi}
S(Z|B)_{\hat{\rho}_{ZB}} + S(Z|C)_{\hat{\rho}_{ZC}} &= 2S(Z|B)_{\hat{\rho}_{ZB}} - S(A|B)_{\hat{\rho}_{AB}}\nonumber\\
&\ge M\ln\left(2+2\cosh\frac{S(A|B)_{\hat{\rho}_{AB}}}{M}\right) \ge M\ln4\;.
\end{align}
\end{proof}

\section{Optimality of the uncertainty relations}\label{sec:opt}
\begin{theorem}[optimality of the bipartite memory uncertainty relation]\label{thm:optimbi}
The uncertainty relation with bipartite memory \eqref{eq:biEUR} is optimal and the minimum \eqref{eq:biEUR} for $S(Z|B)$ is asymptotically achieved by a suitable sequence of quantum Gaussian states.
Indeed, let $A$ and $B$ be $M$-mode Gaussian quantum systems, and for any $a\ge1$ let $\hat{\gamma}_{AB}^{(a)}$ be the tensor product of $M$ two-mode squeezed pure quantum Gaussian states, with covariance matrix
\begin{equation}
\sigma_{AB}^{(a)} = \bigoplus_{i=1}^M \frac{1}{2}\left(
                \begin{array}{cc}
                  a\,I_2 & \sqrt{a^2-1}\,\sigma_Z \\
                  \sqrt{a^2-1}\,\sigma_Z & a\,I_2\\
                \end{array}
              \right)\;,
\end{equation}
where
\begin{equation}
I_2 = \left(
        \begin{array}{cc}
          1 & 0 \\
          0 & 1 \\
        \end{array}
      \right)\;,\qquad \sigma_Z = \left(
                         \begin{array}{cc}
                           1 & 0 \\
                           0 & -1 \\
                         \end{array}
                       \right)\;.
\end{equation}
Let us fix $s\in\mathbb{R}$, and let us define
\begin{equation}
\kappa = \exp\frac{s}{M} + 1\;.
\end{equation}
Then, the quantum Gaussian state
\begin{equation}\label{eq:defgamma}
\hat{\gamma}_{AB}^{(s,a)} = \left(\mathcal{A}_\kappa\otimes\mathbb{I}_B\right)\left(\hat{\gamma}_{AB}^{(a)}\right)
\end{equation}
satisfies
\begin{equation}
\lim_{a\to\infty} S(A|B)_{\hat{\gamma}_{AB}^{(s,a)}} = s\;,\qquad \lim_{a\to\infty}S(Z|B)_{\hat{\gamma}_{ZB}^{(s,a)}} = M\ln\left(\exp\frac{s}{M}+1\right)\;,
\end{equation}
where $\hat{\gamma}_{ZB}^{(s,a)}$ is the probability measure on $\mathbb{C}^M$ with values in positive operators on $B$ associated to the heterodyne measurement on $A$, defined as in \eqref{eq:defrhoZ}.
\end{theorem}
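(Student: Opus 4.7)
The plan is to compute $S(A|B)_{\hat{\gamma}_{AB}^{(s,a)}}$ and $S(Z|B)_{\hat{\gamma}_{ZB}^{(s,a)}}$ directly from the covariance matrix of $\hat{\gamma}_{AB}^{(s,a)}$ using the standard Gaussian formulas of \autoref{sec:GQS} and \autoref{sec:A}, and then to combine the exact upper bounds obtained in this way with the lower bound already provided by \autoref{thm:ineq}. Since $\sigma_{AB}^{(a)}$ is a direct sum over $M$ mode-pairs and the amplifier acts independently on each mode, the whole problem factorizes and it is enough to analyze a single mode-pair; the final entropies are then $M$ times the one-mode quantities.

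For the first limit, the rule of \autoref{sec:A} gives the per-mode covariance
\[
\sigma_{AB}^{(s,a)} = \frac{1}{2}\left(\begin{array}{cc} (\kappa a + \kappa - 1)\,I_2 & \sqrt{\kappa(a^2-1)}\,\sigma_Z \\ \sqrt{\kappa(a^2-1)}\,\sigma_Z & a\,I_2 \end{array}\right),
\]
and I would then compute its symplectic spectrum via $\nu_\pm^2 = (\Delta \pm \sqrt{\Delta^2 - 4\det\sigma})/2$ with $\Delta = \det A + \det B + 2\det C$. Using $\det\sigma_Z = -1$, the algebra collapses to $\Delta = ((a(\kappa-1)+\kappa)^2+1)/4$ and $4\det\sigma = (a(\kappa-1)+\kappa)^2/4$, so $\nu_+ = (a(\kappa-1)+\kappa)/2$ and $\nu_- = 1/2$. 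Hence $S(AB) = M\,g((\kappa-1)(a+1)/2)$, and since $\sigma_B = (a/2)I_{2M}$ one gets $S(B) = M\,g((a-1)/2)$. The expansion $g(x) = \ln x + 1 + O(1/x)$ then yields
\[
\lim_{a\to\infty} S(A|B)_{\hat{\gamma}_{AB}^{(s,a)}} = M\ln(\kappa - 1) = s.
\]

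For the second limit, I would use $S(Z|B) = H(Z) + S(B|Z) - S(B)$. A standard Husimi computation shows that the heterodyne outcome on $A$ is classically Gaussian with covariance $\sigma_A + I_{2M}/2$, and tracking the $\mathrm{d}^{2M}z/\pi^M$ normalization gives $H(Z) = M + \frac{1}{2}\ln\det(\sigma_A + I_{2M}/2) = M + M\ln(\kappa(a+1)/2)$. The key step is that the conditional state of $B$ given a heterodyne outcome $\mathbf{z}$ on $A$ is Gaussian with covariance equal to the Schur complement
\[
\sigma_B - C^T(\sigma_A + I_{2M}/2)^{-1} C = \frac{a}{2}I_{2M} - \frac{a^2-1}{2(a+1)}I_{2M} = \frac{1}{2}I_{2M}.
\]
This is exactly the vacuum, so the posterior is a coherent state for every $\mathbf{z}$ and $S(B|Z) = 0$ exactly, not only asymptotically. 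Consequently
\[
S(Z|B)_{\hat{\gamma}_{ZB}^{(s,a)}} = M\ln\kappa + M\bigl[\,1 + \ln((a+1)/2) - g((a-1)/2)\,\bigr],
\]
and the bracket vanishes in the limit by the same $g$-asymptotic, giving $\lim_{a\to\infty} S(Z|B)_{\hat{\gamma}_{ZB}^{(s,a)}} = M\ln\kappa = M\ln(\exp(s/M)+1)$, which matches the lower bound \eqref{eq:biEUR} evaluated at $S(A|B)=s$.

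The main obstacle is purely computational: the symplectic-spectrum identity and the Schur-complement identity must line up so that all $a$-dependence cancels cleanly. Behind this cancellation lies the physical content of the extremizing sequence: in the EPR limit $a\to\infty$ the two-mode squeezed state $\hat{\gamma}_{AB}^{(a)}$ perfectly correlates $A$ and $B$, and the amplifier $\mathcal{A}_\kappa$ on $A$ inserts exactly the amount of phase-space spread needed so that heterodyne on $A$ teleports a coherent state onto $B$, saturating both the Entropy Power Inequality \eqref{eq:EPI} and the Berezin-Lieb step of \autoref{thm:QA} simultaneously.
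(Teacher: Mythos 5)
Your proposal is correct and follows essentially the same route as the paper: compute the symplectic spectrum of $\sigma_{AB}^{(s,a)}$ to get $S(A|B)\to M\ln(\kappa-1)=s$, then evaluate $S(Z|B)=H(Z)+S(B|Z)-S(B)$ with $S(B|Z)=0$ and $H(Z)=M\ln\frac{\kappa(a+1)}{2}+M$, exactly the quantities the paper uses. The only difference is how $S(B|Z)=0$ is justified — you compute the conditional covariance of $B$ via the Schur complement $\sigma_B-C^T(\sigma_A+I_{2M}/2)^{-1}C=\tfrac12 I_{2M}$ and find the posterior is the vacuum, whereas the paper notes via the amplifier--heterodyne adjoint relation $\langle\mathbf{z}|(\mathcal{A}_\kappa\otimes\mathbb{I}_B)(\hat{\gamma}_{AB}^{(a)})|\mathbf{z}\rangle=\kappa^{-M}\langle\mathbf{z}/\sqrt{\kappa}|\hat{\gamma}_{AB}^{(a)}|\mathbf{z}/\sqrt{\kappa}\rangle$ that the conditional state is pure because $\hat{\gamma}_{AB}^{(a)}$ is pure — and both arguments are valid.
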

\begin{proof}
The quantum Gaussian state $\hat{\gamma}_{AB}^{(s,a)}$ has covariance matrix
\begin{equation}
\sigma_{AB}^{(s,a)} = \bigoplus_{i=1}^M\frac{1}{2}\left(
                \begin{array}{cc}
                  \left(\kappa\,a+\kappa-1\right)I_2 & \sqrt{\kappa\left(a^2-1\right)}\,\sigma_Z\\
                  \sqrt{\kappa\left(a^2-1\right)}\,\sigma_Z & a\,I_2\\
                \end{array}
              \right),
\end{equation}
whose symplectic eigenvalues are
\begin{equation}
\nu_+ = \frac{\kappa\,a + \kappa - a}{2}\;,\qquad \nu_- = \frac{1}{2}\;,
\end{equation}
each with multiplicity $M$.
We then have
\begin{equation}
S(A|B)_{\hat{\gamma}_{AB}^{(s,a)}} = M\,g\left(\frac{\kappa\,a + \kappa - a - 1}{2}\right) - M\,g\left(\frac{a - 1}{2}\right)\;,
\end{equation}
and
\begin{equation}
\lim_{a\to\infty}S(A|B)_{\hat{\gamma}_{AB}^{(s,a)}} = M\ln\left(\kappa-1\right) = s\;.
\end{equation}
From \cite{de2017wehrl}, Lemma 4, we have for any $\mathbf{z}\in\mathbb{C}^M$
\begin{equation}
\langle\mathbf{z}|\hat{\gamma}_{AB}^{(s,a)}|\mathbf{z}\rangle = \langle\mathbf{z}|\left(\mathcal{A}_\kappa\otimes\mathbb{I}_B\right)\left(\hat{\gamma}_{AB}^{(a)}\right)|\mathbf{z}\rangle = \frac{\langle\mathbf{z}/\sqrt{\kappa}|\hat{\gamma}_{AB}^{(a)}|\mathbf{z}/\sqrt{\kappa}\rangle}{\kappa^M}\;,
\end{equation}
where $|\mathbf{z}\rangle$ is the coherent state on $A$.
Then, since $\hat{\gamma}_{AB}^{(a)}$ is pure, also $\langle\mathbf{z}|\hat{\gamma}_{AB}^{(s,a)}|\mathbf{z}\rangle$ is pure and
\begin{equation}
S(B|Z)_{\hat{\gamma}_{ZB}^{(s,a)}} = 0\;.
\end{equation}
We have from \cite{de2017wehrl}, Eq. (70)
\begin{equation}
S\left(\hat{\gamma}_Z^{(s,a)}\right) = M\ln\frac{\kappa\left(a+1\right)}{2} + M\;,
\end{equation}
hence
\begin{align}
S(Z|B)_{\hat{\gamma}_{ZB}^{(s,a)}} &= S(B|Z)_{\hat{\gamma}_{ZB}^{(s,a)}} + S\left(\hat{\gamma}_Z^{(s,a)}\right) - S\left(\hat{\gamma}_B^{(s,a)}\right)\nonumber\\
&= M\ln\frac{\kappa\left(a+1\right)}{2} + M - M\,g\left(\frac{a-1}{2}\right)\;.
\end{align}
Finally,
\begin{equation}
\lim_{a\to\infty}S(Z|B)_{\hat{\gamma}_{ZB}^{(s,a)}} = M\ln\kappa = M\ln\left(\exp\frac{s}{M} + 1\right)\;.
\end{equation}
\end{proof}
\begin{theorem}[optimality of the tripartite memory uncertainty relation]\label{thm:optimtri}
The uncertainty relation with tripartite memory \eqref{eq:triEUR} is optimal and the value $M\ln4$ for $S(Z|A)+S(Z|B)$ is asymptotically achieved by a suitable sequence of quantum Gaussian states.
Indeed, for any $a\ge1$ let $\hat{\gamma}_{ABC}^{(0,a)}$ be the purification of the quantum Gaussian state $\hat{\gamma}_{AB}^{(0,a)}$ defined in \eqref{eq:defgamma}.
We then have
\begin{equation}
\lim_{a\to\infty}\left(S(Z|B)_{\hat{\gamma}_{ZB}^{(0,a)}} + S(Z|C)_{\hat{\gamma}_{ZC}^{(0,a)}}\right) = M\ln 4\;.
\end{equation}
\end{theorem}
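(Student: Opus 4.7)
The plan is to reduce the claim to the bipartite optimality result (Theorem \ref{thm:optimbi}) via the purity identity that was already exploited in the proof of Theorem \ref{thm:main}. Specifically, because $\hat{\gamma}_{ABC}^{(0,a)}$ is pure by construction, the calculation performed in the proof of Theorem \ref{thm:main} gives, without any additional work, the identity
\begin{equation}
S(Z|C)_{\hat{\gamma}_{ZC}^{(0,a)}} = S(Z|B)_{\hat{\gamma}_{ZB}^{(0,a)}} - S(A|B)_{\hat{\gamma}_{AB}^{(0,a)}}\;.
\end{equation}
The first step is to justify applying that identity: one checks that the hypotheses of Theorem \ref{thm:main} hold for $\hat{\gamma}_{ABC}^{(0,a)}$, i.e. that $\hat{\gamma}_A^{(0,a)}$ has finite energy (its covariance matrix is a scalar multiple of the identity) and that $\hat{\gamma}_B^{(0,a)}$, $\hat{\gamma}_C^{(0,a)}$ have finite von Neumann entropy (the former is a thermal state with symplectic eigenvalues $a/2$, and the latter inherits finite entropy from $\hat{\gamma}_{AB}^{(0,a)}$ by purity and the additivity of Gaussian entropies).

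The second step is to substitute the two limits already computed in Theorem \ref{thm:optimbi} for the parameter choice $s=0$, namely $\kappa=2$, which yield
\begin{equation}
\lim_{a\to\infty}S(A|B)_{\hat{\gamma}_{AB}^{(0,a)}} = 0\;,\qquad \lim_{a\to\infty}S(Z|B)_{\hat{\gamma}_{ZB}^{(0,a)}} = M\ln 2\;.
\end{equation}
Plugging these into the identity above immediately gives $\lim_{a\to\infty}S(Z|C)_{\hat{\gamma}_{ZC}^{(0,a)}}= M\ln 2$, and therefore the sum tends to $2M\ln 2 = M\ln 4$, which is precisely the lower bound of Theorem \ref{thm:main}.

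I do not expect any serious obstacle: the proof is essentially a bookkeeping exercise combining the pointwise-pure-conditional-state identity from the proof of Theorem \ref{thm:main} with the two asymptotics already established in Theorem \ref{thm:optimbi}. The only mildly delicate point is that the identity derived in the proof of Theorem \ref{thm:main} is an equality between possibly large finite quantities, so one must verify finiteness of $S(\hat{\gamma}_B^{(0,a)})$, $S(\hat{\gamma}_C^{(0,a)})$, and $S(\hat{\gamma}_Z^{(0,a)})$ for each fixed $a$ before passing to the limit; these are all standard Gaussian computations, analogous to those carried out in the proof of Theorem \ref{thm:optimbi} via the formula $S=\sum_i g(\nu_i-1/2)$.
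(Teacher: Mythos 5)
Your proposal is correct and follows essentially the same route as the paper: both use the identity $S(Z|C)=S(Z|B)-S(A|B)$ established in the proof of \autoref{thm:main} for pure tripartite states, and then substitute the two limits from \autoref{thm:optimbi} with $s=0$, $\kappa=2$, to obtain $2M\ln 2 - 0 = M\ln 4$. The extra finiteness checks you mention are sensible diligence but do not change the argument.
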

\begin{proof}
From \autoref{thm:optimbi} we have
\begin{equation}
\lim_{a\to\infty}S(A|B)_{\hat{\gamma}_{AB}^{(0,a)}} = 0\;,\qquad \lim_{a\to\infty} S(Z|B)_{\hat{\gamma}_{ZB}^{(0,a)}} = M\ln 2\;.
\end{equation}
We then have from \eqref{eq:proofbi}
\begin{align}
\lim_{a\to\infty}\left(S(Z|B)_{\hat{\gamma}_{ZB}^{(0,a)}} + S(Z|C)_{\hat{\gamma}_{ZC}^{(0,a)}}\right) &= \lim_{a\to\infty}\left(2S(Z|B)_{\hat{\gamma}_{ZB}^{(0,a)}} - S(A|B)_{\hat{\gamma}_{AB}^{(0,a)}}\right) \nonumber\\
&= M\ln4\;.
\end{align}
\end{proof}

\section*{Acknowledgements}
I acknowledge financial support from the European Research Council (ERC Grant Agreements no 337603 and 321029), the Danish Council for Independent Research (Sapere Aude) and VILLUM FONDEN via the QMATH Centre of Excellence (Grant No. 10059).

\includegraphics[width=0.05\textwidth]{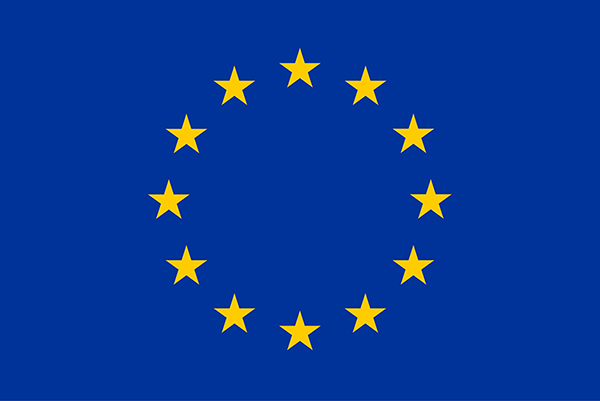}
This project has received funding from the European Union's Horizon 2020 research and innovation programme under the Marie Sk\l odowska-Curie grant agreement No. 792557.

\appendix
\section{}
\begin{lemma}[Jensen's trace inequality]\label{lem:jensen}
Let us consider the operator
\begin{equation}
\hat{A} = \sum_{k=0}^\infty a_k\,|\phi_k\rangle\langle\phi_k|\;,\qquad 0\le a_k\le 1\qquad\forall\;k\in\mathbb{N}\;,\qquad \sum_{k=0}^\infty a_k<\infty\;,
\end{equation}
where the vectors $|\phi_k\rangle$ form a resolution of the identity:
\begin{equation}\label{eq:eqphi}
\sum_{k=0}^\infty |\phi_k\rangle\langle\phi_k| = \hat{\mathbb{I}}\;.
\end{equation}
Then, for any concave function $f:[0,1]\to[0,\infty)$,
\begin{equation}
\mathrm{Tr}\,f\left(\hat{A}\right) \ge \sum_{k=0}^\infty f(a_k)\,\langle\phi_k|\phi_k\rangle\;.
\end{equation}
\end{lemma}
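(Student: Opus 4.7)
My approach is to compare the spectral decomposition of $\hat{A}$ with its given expansion $\sum_k a_k|\phi_k\rangle\langle\phi_k|$, and to extract the inequality from scalar Jensen applied to the matrix of transition probabilities between the two representations. First I would verify that $\hat{A}$ is a bounded positive operator with $0\le\hat{A}\le\hat{\mathbb{I}}$: this is immediate from $0\le a_k\le 1$ and the resolution of the identity \eqref{eq:eqphi}. Under the standing hypothesis $\sum_k a_k<\infty$ (together with uniform boundedness of $\|\phi_k\|$, which holds in the intended application to \autoref{thm:LB} where $|\phi_k\rangle=\langle\mathbf{z}|\psi_k\rangle$ satisfies $\|\phi_k\|\le 1$), $\hat{A}$ is trace-class, so the spectral theorem supplies an orthonormal eigenbasis $\{|\psi_n\rangle\}$ of the ambient Hilbert space with eigenvalues $\lambda_n\in[0,1]$.

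The core of the argument is the identity
$$\lambda_n\;=\;\langle\psi_n|\hat{A}|\psi_n\rangle\;=\;\sum_k a_k\,p_{nk},\qquad p_{nk}:=|\langle\psi_n|\phi_k\rangle|^2,$$
combined with the two completeness relations $\sum_k p_{nk}=\langle\psi_n|\hat{\mathbb{I}}|\psi_n\rangle=1$ (from \eqref{eq:eqphi}) and $\sum_n p_{nk}=\langle\phi_k|\phi_k\rangle$ (Parseval in the orthonormal basis $\{|\psi_n\rangle\}$). The first of these presents $\lambda_n$ as a convex combination of the $a_k$, so scalar Jensen's inequality for concave $f$ gives
$$f(\lambda_n)\;\ge\;\sum_k p_{nk}\,f(a_k).$$
Summing over $n$, exchanging the order of summation (legitimate by Tonelli since $f\ge 0$), and applying the second completeness relation yields
$$\mathrm{Tr}\,f(\hat{A})\;=\;\sum_n f(\lambda_n)\;\ge\;\sum_k f(a_k)\sum_n p_{nk}\;=\;\sum_k f(a_k)\,\langle\phi_k|\phi_k\rangle,$$
which is the desired bound.

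The step I expect to require the most care is the spectral one: if one drops the implicit trace-class property then $\hat{A}$ is merely bounded, and the discrete eigensum must be replaced by an integral against the projection-valued spectral measure of $\hat{A}$, with the identity $\lambda=\sum_k a_k\,p_k(\lambda)$ and Jensen applied pointwise on the spectrum before integrating. A minor subsidiary point is the treatment of null-space eigenvectors: when $\lambda_n=0$ one necessarily has $a_k=0$ whenever $p_{nk}>0$, so Jensen degenerates to the equality $f(0)=\sum_k p_{nk}f(0)$ and contributes nothing pathological; if moreover $f(0)>0$ and the null space is infinite-dimensional, both sides of the claimed inequality become $+\infty$ and the statement is trivial. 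In the only application of interest, $f(x)=-x\ln x$ vanishes at zero, so this degeneracy does not even arise.
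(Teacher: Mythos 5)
Your proof is correct and follows essentially the same route as the paper: diagonalize the trace-class operator $\hat{A}$, write each eigenvalue as a convex combination $\lambda_l=\sum_k|\langle v_l|\phi_k\rangle|^2 a_k$, apply scalar Jensen's inequality to the concave $f$, and resum using the two completeness relations. The only cosmetic difference is that you import the bound $\langle\phi_k|\phi_k\rangle\le 1$ (and hence trace-classness) from the intended application, whereas it already follows directly from \eqref{eq:eqphi}, so no extra hypothesis is needed.
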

\begin{proof}
From \eqref{eq:eqphi} we get
\begin{equation}
\langle\phi_k|\phi_k\rangle \le 1\qquad\forall\;k\in\mathbb{N}\;.
\end{equation}
We then have
\begin{equation}
\mathrm{Tr}\,\hat{A} = \sum_{k=0}^\infty a_k\,\langle\phi_k|\phi_k\rangle \le \sum_{k=0}^\infty a_k < \infty\;.
\end{equation}
$\hat{A}$ has then discrete spectrum, and we can diagonalize it:
\begin{equation}
\hat{A} = \sum_{l=0}^\infty \lambda_l\,|v_l\rangle\langle v_l|\;,\qquad\langle v_k|v_l\rangle = \delta_{kl}\quad\forall\;k,\,l\in\mathbb{N}\;,\qquad \sum_{l=0}^\infty |v_l\rangle\langle v_l| = \hat{\mathbb{I}}\;.
\end{equation}
From the completeness of the $|\phi_k\rangle$, for any $l\in\mathbb{N}$
\begin{equation}
\sum_{k=0}^\infty |\langle v_l|\phi_k\rangle|^2 = \langle v_l|v_l\rangle =1\;,
\end{equation}
hence $|\langle v_l|\phi_k\rangle|^2$ is a probability distribution on $\mathbb{N}$.
We then have from Jensen's inequality
\begin{align}
\mathrm{Tr}\,f\left(\hat{A}\right) &= \sum_{l=0}^\infty f(\lambda_l) = \sum_{l=0}^\infty f\left(\sum_{k=0}^\infty |\langle v_l|\phi_k\rangle|^2 a_k\right) \ge \sum_{k,\,l=0}^\infty |\langle v_l|\phi_k\rangle|^2\,f(a_k)\nonumber\\
&= \sum_{k=0}^\infty \langle \phi_k|\phi_k\rangle\,f(a_k)\;,
\end{align}
where we have used that for the completeness of the $|v_l\rangle$, for any $k\in\mathbb{N}$
\begin{equation}
\sum_{l=0}^\infty |\langle v_l|\phi_k\rangle|^2 = \langle\phi_k|\phi_k\rangle\;.
\end{equation}
\end{proof}

\bibliographystyle{spmpsci}
\bibliography{biblio}

\end{document}